\newtheorem{theorem}{Theorem}
\newtheorem{lemma}{Lemma}
\newtheorem{rem}{Remark}
\begin{document}
\title{Chow's theorem for Hilbert Grassmannians as a Wigner-type theorem}
\author{Mark Pankov, Adam Tyc}
\subjclass[2020]{15A86}

\keywords{Hilbert Grassmannian, compatibility, adjacency}
\address{Mark Pankov: Faculty of Mathematics and Computer Science, 
University of Warmia and Mazury, S{\l}oneczna 54, 10-710 Olsztyn, Poland}
\email{pankov@matman.uwm.edu.pl}
\address{Adam Tyc: Faculty of Mathematics and Computer Science, 
University of Warmia and Mazury, S{\l}oneczna 54, 10-710 Olsztyn, Poland}
\email{adam.tyc@matman.uwm.edu.pl}

\begin{abstract}
Let $H$ be an infinite-dimensional complex Hilbert space.
Denote by ${\mathcal G}_{\infty}(H)$ the Grassmannian formed by closed subspaces of $H$ whose dimension and codimension both are infinite.
We say that $X,Y\in {\mathcal G}_{\infty}(H)$ are {\it ortho-adjacent} if they are compatible and $X\cap Y$ is a hyperplane in both $X,Y$.
A subset ${\mathcal C}\subset {\mathcal G}_{\infty}(H)$ is called an $A$-{\it component}  if 
for any $X,Y\in {\mathcal C}$ the intersection $X\cap Y$ is of the same finite codimension in both $X,Y$ 
and ${\mathcal C}$ is maximal with respect to this property.
Let $f$ be a bijective transformation of ${\mathcal G}_{\infty}(H)$ preserving the ortho-adjacency relation in both directions.
We show that the restriction of $f$ to every $A$-component  is induced by a unitary or anti-unitary operator
or it is the composition of the orthocomplementary map and a map induced by a unitary or anti-unitary operator.
Note that the restrictions of $f$ to distinct $A$-components can be related to different operators.
\end{abstract}

\maketitle

\section{Introduction}
Two (not necessarily finite-dimensional) subspaces of a vector space are called {\it adjacent} if their intersection is a hyperplane in each of 
these subspaces \cite{BH,PankGCB,Pankov-book}.
Any two adjacent subspaces are of the same dimension and codimension.
Classic Chow's theorem \cite{Chow} characterizes semilinear automorphisms of vector spaces over division rings as 
adjacency preserving transformations of Grassmannians formed by finite-dimensional subspaces.
This result is a generalization of the Fundamental Theorem of Projective Geometry
characterizing semilinear automorphisms as automorphisms of projective spaces.
There are analogues of Chow's theorem for totally isotropic subspaces of sesquilinear forms, singular subspaces of quadratic forms
and various types of matrix spaces \cite{Die,PankGCB,Wan}.
An extension of Chow's theorem on Grassmannians consisting of subspaces with infinite dimension and codimension is an open problem.
We refer to \cite[Section 2.4]{Pankov-book} for an explanation why this problem is hard.

Chow's theorem was used to prove some Wigner-type theorems for Hilbert Grassmannians formed by finite-dimensional subspaces \cite{GS,Geher1,Pankov-book}.
These Grassmannians are identified with  the  conjugacy  classes  of  finite-rank projections;
the adjacency relation can be interpreted as follows: 
the ranges of projections are adjacent if their difference is an operator of rank $2$ (i.e. the smallest possible).

Classic Wigner’s theorem \cite{Wigner} characterizes unitary and anti-unitary operators as quantum symmetries;
see \cite{Chev} for a brief history and physical background.
By Gleason's theorem \cite{Gleason}, 
pure states of a quantum mechanical system can be identified with rank-one projections or, equivalently, rays of a complex Hilbert space $H$.
The non-bijective version of Wigner’s theorem states that every (not necessarily bijective) transformation of the pure state space preserving the transition probability, i.e. the angles between rays,
is induced by a linear or conjugate-linear isometry.
On the other hand, Uhlhorn's version of Wigner’s theorem \cite{Uhlhorn} says that every bijective transformation preserving the orthogonality of rays in both directions is induced by 
a unitary or anti-unitary operator if $\dim H\ge 3$; the assumption of bijectivity cannot be omitted if $H$ is infinite-dimensional.
Some recent generalizations of Wigner's theorem can be found in \cite{GeherWig,GM,PV,SemrlWig}.

In \cite{Molnar1,Molnar2}, the non-bijective version of Wigner's theorem is extended on the Grassmannian of $k$-dimensional subspaces of $H$ as follows: 
every (not necessarily bijective) transformation preserving the principal angles between subspaces is induced by  a linear or conjugate-linear isometry if $\dim H\ne 2k$;
in the case when $\dim H=2k$, such a transformation can be also the composition of the orthocomplementary map and a transformation induced by a linear conjugate-linear isometry.
Also, there is an extension of Uhlhorn's version. 
By \cite{Gyory,Semrl1}, every bijective transformation of the Grassmannian preserving the orthogonality relation in both directions is 
induced by a unitary or anti-unitary operator if $\dim H>2k$ (the orthogonality relation is not defined if $\dim H<2k$, for the case when $\dim H=2k$ the statement fails);
as above, the bijectivity assumption  cannot be dropped  if $H$ is infinite-dimensional.
Various examples of Wigner-type theorems for Hilbert Grassmannians can be found in \cite{Molnar-book,Pankov-book}.
%they belong to the discipline known as {\it preserver problems related to quantum mechanics}.

In \cite{PPZ}, the adjacency relation on Hilbert Grassmannians of finite-dimensional subspaces is replaced by the ortho-adjacency
which provides a Chow-type theorem characterizing unitary and anti-unitary operators. 
Closed subspaces of a Hilbert space are called {\it ortho-adjacent} if they are adjacent and compatible. 
The compatibility of subspaces is equivalent to the commutativity of the corresponding projections.
Note that the ortho-adjacency relation for anisotropic subspaces of  sesquilinear forms is considered in \cite{Havlicek,PZ}.

Now, we assume that the Hilbert space $H$ is infinite-dimensional and consider the Grassmannian ${\mathcal G}_{\infty}(H)$ formed by closed subspaces of $H$ whose dimension and codimension both are infinite.
The Grassmannian is partially ordered by the inclusion relation and  every automorphism of this poset is induced by an invertible bounded linear or conjugate-linear operator
\cite[Theorem 3.17]{Pankov-book}.
This is closely related to \cite[Theorem 1.2]{Semrl1} which states that every bijective transformation of  ${\mathcal G}_{\infty}(H)$ preserving the orthogonality in both directions
is induced by a unitary or anti-unitary operator.
The orthocomplementary transformation of  ${\mathcal G}_{\infty}(H)$ is adjacency and ortho-adjacency preserving; however, it reverses inclusions and does not preserve the orthogonality.
Furthermore, there are bijective transformations of  ${\mathcal G}_{\infty}(H)$ preserving the adjacency and ortho-adjacency in both directions and 
non-obtainable by unitary or anti-unitary operators and the possible composition with  the orthocomplementary map. 
This is due to the existence of elements of  ${\mathcal G}_{\infty}(H)$ whose intersection is of infinite codimension in each of them; 
such elements cannot be connected by a finite sequence, where consecutive elements are adjacent or ortho-adjacent.

In the present paper, we  show that every  bijective transformation of  ${\mathcal G}_{\infty}(H)$ preserving the ortho-adjacency in both directions {\it locally} is induced by a unitary or anti-unitary operator 
with the possible composition with  the orthocomplementary map. 
At the end (Section 5), we explain why a similar statement is not proved for the adjacency relation and invertible bounded linear or conjugate-linear operators.

\section{Main result}
Let $H$ be a complex Hilbert space of dimension not less than $3$. 
For every positive integer $k<\dim H$ we denote by ${\mathcal G}_k(H)$ and ${\mathcal G}^k(H)$
the Grassmannians formed by $k$-dimensional subspaces of $H$ and closed subspaces of $H$ whose codimension is $k$, respectively.
Note that ${\mathcal G}^k(H)={\mathcal G}_{n-k}(H)$ if $\dim H=n$ is finite.
In the case when $H$ is infinite-dimensional,  we write ${\mathcal G}_{\infty}(H)$ for the Grassmannian of closed subspaces of $H$ whose dimension and codimension both are infinite. 

Let ${\mathcal G}$ be one of the Grassmannians ${\mathcal G}_k(H),{\mathcal G}^k(H)$ or ${\mathcal G}_{\infty}(H)$.
Following \cite{BH,PankGCB,Pankov-book}, we say that
$X,Y\in {\mathcal G}$ are {\it adjacent} if $X\cap Y$ is a hyperplane in both $X,Y$.
In the case when ${\mathcal G}$ is ${\mathcal G}_1(H)$ or ${\mathcal G}^1(H)$, any two distinct elements of ${\mathcal G}$ are adjacent. 
Recall that two closed subspaces of $H$ are {\it compatible} if there is an orthonormal basis of $H$ such that each of these subspaces is spanned by a subset of this basis.
Subspaces $X,Y\in {\mathcal G}$ are called {\it ortho-adjacent}  if they are adjacent and compatible. 
If ${\mathcal G}$ is ${\mathcal G}_k(H)$ or ${\mathcal G}^k(H)$, 
then any $X,Y\in {\mathcal G}$ can be connected by a finite sequence of elements of ${\mathcal G}$ such that any two consecutive elements are adjacent.
If ${\mathcal G}={\mathcal G}_\infty(H)$, then the same holds only in the case when
$X\cap Y$ has the same finite codimension in both $X,Y$; see \cite[Section 2.3]{Pankov-book}.
For any adjacent $X,Y\in {\mathcal G}$ there is $Z\in {\mathcal G}$ ortho-adjacent to both $X,Y$.
Therefore, if two elements of ${\mathcal G}$ are connected via the adjacency, then they are also connected via the ortho-adjacency.
A subset ${\mathcal C}\subset {\mathcal G}_{\infty}(H)$ is said to be an $A$-{\it component}  if 
for any $X,Y\in {\mathcal C}$ the intersection $X\cap Y$ is of the same finite codimension in both $X,Y$ 
(in other words, $X,Y$ are connected via the adjacency) and ${\mathcal C}$ is maximal with respect to this property.

Every unitary or anti-unitary operator on $H$ induces bijective transformations of ${\mathcal G}_k(H),{\mathcal G}^k(H)$ and ${\mathcal G}_{\infty}(H)$
which preserve the adjacency and ortho-adjacency  relations in both directions. 
The orthocomplementary map $X\to X^{\perp}$ induces bijections of ${\mathcal G}_k(H)$ and ${\mathcal G}^k(H)$ 
to ${\mathcal G}^k(H)$  and ${\mathcal G}_k(H)$ (respectively)  and a bijective transformation of ${\mathcal G}_{\infty}(H)$
preserving the adjacency and ortho-adjacency  relations in both directions.

Two $1$-dimensional subspaces of $H$ are ortho-adjacent if and only if they are orthogonal. 
By Uhlhorn's version of Wigner's theorem \cite{Uhlhorn}, every bijective transformation of ${\mathcal G}_1(H)$  preserving the orthogonality relation in both directions
is induced by a unitary or anti-unitary operator.
The main result of \cite{PPZ} states that the same holds for  every bijective transformation of ${\mathcal G}_k(H)$  preserving the ortho-adjacency relation in both directions if 
$\dim H\ne 2k$;
in the case when $\dim H=2k\ge 6$, every such transformation is induced by a unitary or anti-unitary operator or it is the composition of the orthocomplementary map and 
a transformation induced by a unitary or anti-unitary operator.
If $\dim H=2k=4$, then the latter statement fails and a description of ortho-adjacency preserving transformations is an open problem, see \cite{Havlicek, PPZ} for the details.

\begin{rem}{\rm
If $H$ is infinite-dimensional and $f$ is a bijective transformation of ${\mathcal G}^k(H)$ preserving the ortho-adjacency in both directions,
then $X\to f(X^{\perp})^{\perp}$ is a bijective transformation of ${\mathcal G}_k(H)$ which also preserves the ortho-adjacency in both directions.
If the latter transformation is induced by a unitary or anti-unitary operator $U$, then $f$ is also induced by $U$. 
}\end{rem}

The following example shows that  the above statement fails for bijective transformations of ${\mathcal G}_{\infty}(H)$  preserving the ortho-adjacency relation in both directions.
Let $U$ be a non-identity unitary operator on $H$ which preserves a certain $A$-component ${\mathcal C}\subset {\mathcal G}_{\infty}(H)$.
Consider the bijective transformation $f$ of ${\mathcal G}_{\infty}(H)$ defined as follows: $f(X)=U(X)$ if $X\in {\mathcal C}$ and $f(X)=X$ if $X\not\in {\mathcal C}$.
It is clear that $f$ is ortho-adjacency preserving in both directions.

\begin{theorem}\label{theorem-main}
Let $f$ be a bijective transformation of ${\mathcal G}_{\infty}(H)$ which preserves the ortho-adjacency relation in both directions.
Then the restriction of $f$ to every $A$-component of ${\mathcal G}_{\infty}(H)$ is induced by a unitary or anti-unitary operator or it is 
the composition of the orthocomplementary map and a map induced by a unitary or anti-unitary operator.
\end{theorem}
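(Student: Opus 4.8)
\emph{Plan of the proof.}
The plan is to reduce first to the behaviour of $f$ on a single $A$-component, and then, on that component, to extract a unitary or anti-unitary operator from the combinatorics of one ortho-neighbourhood. On $\mathcal{G}_{\infty}(H)$ the relation ``$X,Y$ can be joined by a finite chain whose consecutive terms are adjacent'' is an equivalence relation whose classes are exactly the $A$-components, and by the discussion in Section~2 it coincides with the analogous relation for ortho-adjacency. Since $f$ and $f^{-1}$ carry ortho-adjacent pairs to ortho-adjacent pairs, they carry such chains to such chains, so $f$ permutes the $A$-components. Fixing an $A$-component $\mathcal{C}$ and setting $\mathcal{C}':=f(\mathcal{C})$, it remains to analyse the bijection $g:=f|_{\mathcal{C}}\colon\mathcal{C}\to\mathcal{C}'$, which preserves ortho-adjacency in both directions.

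Fix $X_0\in\mathcal{C}$. Because ortho-adjacency means ``compatible together with a common hyperplane'', i.e.\ a single exchange of one basis vector, every element ortho-adjacent to $X_0$ has the form $(X_0\ominus\langle e\rangle)\oplus\langle e'\rangle$ with unit vectors $e\in X_0$ and $e'\in X_0^{\perp}$; this identifies the ortho-neighbourhood $N(X_0)$ of $X_0$ with $\mathbb{P}(X_0)\times\mathbb{P}(X_0^{\perp})$, and a direct computation of intersections and angles shows that two of its elements are ortho-adjacent precisely when they coincide in one coordinate and are orthogonal in the other. From this one recovers, using only the ortho-adjacency relation and the single point $X_0$: first, the adjacency relation inside $N(X_0)$ (two elements are adjacent iff they have infinitely many common ortho-neighbours lying in $N(X_0)$, as opposed to at most two); then the two rulings of $N(X_0)$ by coordinate fibres (the fibre through two adjacent elements is these two together with their common adjacency-neighbours, and two distinct fibres lie in the same ruling iff they are disjoint); and finally the orthogonality relation on each of $\mathbb{P}(X_0)$ and $\mathbb{P}(X_0^{\perp})$. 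Hence $g$ either preserves or interchanges the two rulings of $N(X_0)$; replacing $g$ by its composition with the orthocomplementary map if necessary, we may assume it preserves them, so that $g|_{N(X_0)}$ acts as a product of bijections $\varphi\colon\mathbb{P}(X_0)\to\mathbb{P}(X_0')$ and $\chi\colon\mathbb{P}(X_0^{\perp})\to\mathbb{P}((X_0')^{\perp})$, each preserving orthogonality in both directions.

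Since $X_0$ and $X_0^{\perp}$ are infinite-dimensional, Uhlhorn's version of Wigner's theorem shows that $\varphi$ and $\chi$ are induced by unitary or anti-unitary operators $B\colon X_0\to X_0'$ and $A\colon X_0^{\perp}\to(X_0')^{\perp}$. One then has to check that $A$ and $B$ are of the same type (both linear, or both conjugate-linear); this is not visible inside $N(X_0)$ alone and requires comparing the product-collineations induced by $g$ on $N(X_0)$ and on $N(Y)$ for an ortho-adjacent $Y$ over their (infinite-dimensional) overlap. Granting this, $U:=B\oplus A$ is a unitary or anti-unitary operator on $H$ with $g(Y)=U(Y)$ for every $Y\in N(X_0)$.

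It remains to propagate this over $\mathcal{C}$. The one-step claim is: if $g$ coincides with $U$ on $N(Z)$ and $Z'$ is ortho-adjacent to $Z$, then $g$ coincides with $U$ on $N(Z')$ as well; indeed $N(Z)\cap N(Z')$ contains copies of infinite-dimensional projective spaces, which is enough for the product-collineation description of $g|_{N(Z')}$ to be determined on it, and there it coincides with the genuine unitary $U$ (this also excludes a ruling-interchange at $Z'$). Since $\mathcal{C}$ is an $A$-component, every $Y\in\mathcal{C}$ is joined to $X_0$ by an ortho-adjacency chain, and iterating the one-step claim along it gives $g(Y)=U(Y)$; undoing the possible orthocomplementation yields the theorem. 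I expect the main obstacles to be precisely the three rigidity points: the purely combinatorial recovery of the two rulings of $N(X_0)$ from ortho-adjacency, the verification that $A$ and $B$ are of the same type, and the propagation of the equality $g=U$ along chains (equivalently, that the ortho-adjacency data of a neighbourhood rigidly determine a product collineation).
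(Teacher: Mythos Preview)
Your plan diverges from the paper's argument at the very first technical step, and the divergence introduces two gaps that you flag but do not close. The paper does \emph{not} work inside a single ortho-neighbourhood $N(X_0)$. Its key lemma (Lemma~\ref{lemma-main}) gives a purely ortho-adjacency characterisation of adjacency valid for \emph{all} pairs in $\mathcal{G}_\infty(H)$: distinct $X,Y$ are adjacent iff there are infinitely many $Z$ ortho-adjacent to both such that infinitely many $Z'$ are ortho-adjacent to $X,Y,Z$ (for compatible codimension-$2$ pairs only two such $Z'$ exist; incompatible codimension-$2$ pairs are eliminated separately). Once $f$ preserves adjacency, the maximal adjacency-cliques are the tops $\mathcal{G}^1(X)$ and stars $\mathcal{S}(X)$; a cited dichotomy on each $A$-component lets one, after possibly composing with $\perp$, apply Uhlhorn to each top $\mathcal{G}^1(X)$, $X\in\mathcal{C}_{+1}$, obtaining $U_X:X\to X'$. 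Gluing is then painless: for adjacent $X,Y\in\mathcal{C}_{+1}$ the operators $U_X,U_Y$ agree projectively on the infinite-dimensional $X\cap Y$, forcing equal type and yielding a well-defined $h:\mathbb{P}(H)\to\mathbb{P}(H)$; one more global Uhlhorn produces $U$.

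Your route through $N(X_0)\cong\mathbb{P}(X_0)\times\mathbb{P}(X_0^\perp)$ is correct as far as it goes (the product description of ortho-adjacency and the recovery of the rulings are right), but it outputs two operators $B,A$ on \emph{orthogonal} summands, and nothing in $N(X_0)$, nor in the overlap with a neighbouring $N(Y)$, links their types. Concretely, for $Y=(X_0\ominus\langle e\rangle)\oplus\langle e'\rangle$ the overlap $N(X_0)\cap N(Y)$ consists of two partial fibres and yields only: $B_Y$ agrees projectively with $B$ on the hyperplane $X_0\ominus\langle e\rangle$ of $Y$ (so $B_Y$ and $B$ share a type), $A_Y$ agrees with $A$ on the hyperplane $X_0^\perp\ominus\langle e'\rangle$ of $Y^\perp$ (so $A_Y$ and $A$ share a type), together with $B_Y\langle e'\rangle=A\langle e'\rangle$ and $A_Y\langle e\rangle=B\langle e\rangle$. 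These single-line identities carry no type information; you have merely transported the same question to $Y$. The propagation step fails for the same reason: knowing a unitary projectively on a hyperplane plus one complementary line does \emph{not} determine it projectively on the whole space (the relative phase between the two blocks is free), so ``$g=U$ on $N(Z)\cap N(Z')$'' does not force ``$g=U$ on $N(Z')$''. Both gaps evaporate if, instead of forming $U=B\oplus A$ at one point and pushing along chains, you glue the local first-factor operators over varying centres into a single orthogonality-preserving map on $\mathbb{P}(H)$ and invoke Uhlhorn globally---but that is precisely the paper's strategy, carried out over tops rather than ortho-neighbourhoods.
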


%By the example given above, 
Note that the restrictions of $f$ to distinct $A$-components can be related to  different operators.

\section{A characterization of adjacency in terms of ortho-adjacency}
If $X,Y\in{\mathcal G}_{\infty}(H)$ are not ortho-adjacent and there is $Z\in \mathcal{G}_{\infty}(H)$ ortho-adjacent to both $X,Y$, 
then one of the following possibilities is realized:
\begin{enumerate}
\item[$\bullet$] $X,Y$ are adjacent;
\item[$\bullet$] $X\cap Y$ is of codimension $2$ in both $X, Y$. 
\end{enumerate}
Indeed,  $X\cap Z$ and $Y\cap Z$ are hyperplanes of $Z$ and $X,Y$ are adjacent if these hyperplanes coincide;
the same holds if the hyperplanes are distinct and their intersection is a proper subspace of $X\cap Y$;
we obtain the second possibility only when the hyperplanes are distinct and their intersection coincides with $X\cap Y$.

\begin{lemma}\label{lemma-ch1}
Let $X, Y$ be compatible elements of $\mathcal{G}_{\infty}(H)$ whose intersection is of codimension $2$ in both $X, Y$. 
Then there is  $Z\in\mathcal{G}_{\infty}(H)$ ortho-adjacent to both $X,Y$.
For every such $Z$ there are precisely two elements of $\mathcal{G}_{\infty}(H)$ ortho-adjacent to each of $X,Y,Z$.
\end{lemma}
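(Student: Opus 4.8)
The plan is to diagonalise all the subspaces involved simultaneously — this is possible since compatibility of two closed subspaces is equivalent to commutativity of their projections — and then to reduce the statement to a short count of dimensions. First, compatibility of $X,Y$ gives the orthogonal decomposition
\[
H=(X\cap Y)\oplus(X\cap Y^{\perp})\oplus(X^{\perp}\cap Y)\oplus(X^{\perp}\cap Y^{\perp}).
\]
Put $M=X\cap Y$, $A=X\cap Y^{\perp}$, $B=X^{\perp}\cap Y$, $N=X^{\perp}\cap Y^{\perp}$. The codimension hypothesis forces $\dim A=\dim B=2$, and since $X\in\mathcal{G}_{\infty}(H)$ we get $\dim M=\dim N=\infty$. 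For the existence assertion I would take any lines $\ell_{A}\subset A$, $\ell_{B}\subset B$ and set $Z=M\oplus\ell_{A}\oplus\ell_{B}$: being spanned by part of an orthonormal basis adapted to the decomposition above, $Z$ is compatible with $X$ and with $Y$; moreover $X\cap Z=M\oplus\ell_{A}$ and $Y\cap Z=M\oplus\ell_{B}$ are hyperplanes in the relevant pairs, and $\dim Z=\mathrm{codim}\,Z=\infty$, so $Z\in\mathcal{G}_{\infty}(H)$ is ortho-adjacent to both $X,Y$.

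Next I would show that an \emph{arbitrary} $Z\in\mathcal{G}_{\infty}(H)$ ortho-adjacent to both $X,Y$ has exactly this form. Since the projection onto $Z$ commutes with those onto $X$ and $Y$, it commutes with the projections onto $M,A,B,N$; hence each of these four subspaces splits orthogonally into its parts inside $Z$ and inside $Z^{\perp}$, which yields eight ``joint eigenspaces'' $H_{ijk}$ (the indices $i,j,k\in\{0,1\}$ recording membership in $X$ versus $X^{\perp}$, in $Y$ versus $Y^{\perp}$, in $Z$ versus $Z^{\perp}$). Translating ``$X\cap Z$ is a hyperplane in $X$ and in $Z$'', ``$Y\cap Z$ is a hyperplane in $Y$ and in $Z$'' and $\dim A=\dim B=2$ into linear relations among the numbers $\dim H_{ijk}$ gives six equations; solving them forces two of the $H_{ijk}$ to vanish and the others to have dimensions $\infty,1,1,1,1,\infty$. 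Reading this off: $M\subset Z$, $Z\cap N=0$, and $Z$ meets each of $A,B$ in a line, so $Z=M\oplus\ell_{A}\oplus\ell_{B}$ as claimed. In particular we obtain an orthogonal decomposition $H=N\oplus M\oplus a_{0}\oplus a_{1}\oplus b_{0}\oplus b_{1}$ with $N,M$ infinite-dimensional, $A=a_{0}\oplus a_{1}$, $B=b_{0}\oplus b_{1}$, $\ell_{A}=a_{1}$ and $\ell_{B}=b_{1}$.

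Finally, fix such a $Z$ and let $W\in\mathcal{G}_{\infty}(H)$ be ortho-adjacent to each of $X,Y,Z$. As above, the projection onto $W$ commutes with those onto $X,Y,Z$, hence with the projections onto $N,M$ and onto each of the four lines; therefore
\[
W=W_{N}\oplus W_{M}\oplus\varepsilon_{a_{0}}a_{0}\oplus\varepsilon_{a_{1}}a_{1}\oplus\varepsilon_{b_{0}}b_{0}\oplus\varepsilon_{b_{1}}b_{1},
\]
where $W_{N}\subset N$ and $W_{M}\subset M$ are closed subspaces and $\varepsilon_{\bullet}\in\{0,1\}$. Writing out that each of $W\cap X$, $W\cap Y$, $W\cap Z$ is a hyperplane in $W$ and in the corresponding member of $\{X,Y,Z\}$ produces, in terms of $\dim W_{N}$, $\dim(M\ominus W_{M})$ and the $\varepsilon_{\bullet}$, a small linear system whose only solutions are $W_{N}=0$, $W_{M}=M$ together with $(\varepsilon_{a_{0}},\varepsilon_{a_{1}},\varepsilon_{b_{0}},\varepsilon_{b_{1}})\in\{(1,0,0,1),(0,1,1,0)\}$; that is, $W=M\oplus a_{0}\oplus b_{1}$ or $W=M\oplus a_{1}\oplus b_{0}$. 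A direct check shows that both of these lie in $\mathcal{G}_{\infty}(H)$ and are ortho-adjacent to each of $X,Y,Z$, so there are precisely two. The only step demanding a little care is the passage from pairwise commutativity of the projections to the genuine simultaneous orthogonal decomposition of $H$; once that is in hand the rest is pure bookkeeping, and the infinite-dimensionality of $M,N$ is harmless because an infinite cardinal absorbs the finite corrections occurring in the count.
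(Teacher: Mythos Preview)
Your proof is correct and follows essentially the same route as the paper's: both arguments exploit the simultaneous diagonalisation of $X,Y,Z$ (and then $X,Y,Z,W$) afforded by pairwise compatibility to reduce the question to a finite combinatorial check, arriving at the same two candidates $W=M\oplus a_{0}\oplus b_{1}$ and $W=M\oplus a_{1}\oplus b_{0}$. The only difference is packaging---the paper phrases the diagonalisation via a common orthonormal basis and argues directly, whereas you use commuting projections and an explicit dimension count on the joint eigenspaces---but the underlying idea is identical.
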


\begin{proof}
The orthogonal sum of $X\cap Y$, a $1$-dimensional subspace of $X\cap (X\cap Y)^{\perp}$ and a $1$-dimensional subspace of $Y\cap (X\cap Y)^{\perp}$ 
is an element of $\mathcal{G}_{\infty}(H)$ ortho-adjacent to both $X,Y$.

Let $Z$ be an element of $\mathcal{G}_{\infty}(H)$ ortho-adjacent to both $X,Y$.
The subspaces $X,Y,Z$ are mutually compatible and there is an orthonormal basis $B$ of $H$
such that each of these subspaces is spanned by a subset of $B$.
Then $X\cap Y$ and the $2$-dimensional subspaces 
$$X'=X\cap (X\cap Y)^{\perp},\;\;Y'=Y\cap (X\cap Y)^{\perp}$$
are also spanned by subsets of $B$. 
Furthermore, $X\cap Y$ is contained in $Z$ and 
$$P=Z\cap X',\;\;Q=Z\cap Y'$$
are $1$-dimensional.
Let $P'$ and $Q'$ be the $1$-dimensional subspaces 
which are the  orthogonal complements of $P$ in $X'$ and $Q$ in $Y'$, respectively.
Since each of the subspaces $P,P',Q,Q'$ contains a vector from the basis $B$, 
the subspaces
$$Z_1= P' \oplus (X\cap Y)\oplus Q\;\mbox{ and }\; Z_2=P\oplus (X\cap Y) \oplus Q'$$
are spanned by subsets of $B$ and ortho-adjacent to each of $X,Y,Z$. 

Consider any $Z'\in \mathcal{G}_{\infty}(H)$ ortho-adjacent to each of $X,Y,Z$.
The subspaces $X,Y,Z,Z'$ are mutually compatible and there is an orthonormal basis $B'$ such that all these subspaces are spanned by subsets of $B'$. 
Then $X\cap Y, X',Y'$ are also spanned by subsets of $B'$
and each of the $1$-dimensional subspaces $P,P',Q,Q'$ contains a vector from $B'$.
The subspace $Z'$ contains $X\cap Y$,
the subspaces $X'\cap Z'$, $Y'\cap Z'$ are $1$-dimensional and 
$$Z'= (X'\cap Z')\oplus (X\cap Y)\oplus (Y'\cap Z')$$
($Z'$ is ortho-adjacent to $X$ and $Y$).
Recall that 
$$Z=P\oplus (X\cap Y)\oplus Q.$$
Since  $X'\cap Z'$ is a $1$-dimensional subspace of $X'$ containing a vector from $B'$,
it coincides with $P$ or $P'$.
Similarly, $Y'\cap Z'$ coincides with $Q$ or $Q'$.
Then $Z$ and $Z'$ are ortho-adjacent only when $Z'$ is $Z_1$ or $Z_2$.
\end{proof}

\begin{lemma}\label{lemma-ch2}
Let $X, Y$ be elements of $\mathcal{G}_{\infty}(H)$ whose intersection is of codimension $2$ in both $X, Y$. 
If there are ortho-adjacent $Z,Z'\in\mathcal{G}_{\infty}(H)$ such that each of them 
is ortho-adjacent to both $X,Y$, then $X, Y$ are compatible.
\end{lemma}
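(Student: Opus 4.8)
The plan is to use that compatibility of two closed subspaces is equivalent to commutativity of the corresponding orthogonal projections. Since $Z$ and $Z'$ are ortho-adjacent, hence compatible, the projections $P_Z$ and $P_{Z'}$ commute; passing to their joint spectral decomposition gives an orthogonal decomposition $H=H_{00}\oplus H_{01}\oplus H_{10}\oplus H_{11}$ for which $Z=H_{10}\oplus H_{11}$ and $Z'=H_{01}\oplus H_{11}$, so that $Z\cap Z'=H_{11}$ and $(Z+Z')^{\perp}=H_{00}$. Since $Z\cap Z'$ is a hyperplane in both $Z$ and $Z'$, we get $\dim H_{01}=\dim H_{10}=1$, and since $Z\in{\mathcal G}_{\infty}(H)$ the subspace $H_{11}$ is infinite-dimensional.

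Because $X$ is ortho-adjacent, hence compatible, with both $Z$ and $Z'$, the projection $P_X$ commutes with $P_Z$ and $P_{Z'}$, so $X=\bigoplus_{i,j}(X\cap H_{ij})$, and likewise $Y=\bigoplus_{i,j}(Y\cap H_{ij})$. Next I would translate the ortho-adjacency of $X$ with $Z$ and with $Z'$ into dimension identities, phrasing everything with quotient dimensions $\dim(A/B)$ to avoid subtracting infinite cardinals: $X\cap Z$ being a hyperplane of $X$ gives $\dim(X\cap H_{00})+\dim(X\cap H_{01})=1$, $X\cap Z'$ being a hyperplane of $X$ gives $\dim(X\cap H_{00})+\dim(X\cap H_{10})=1$, and then $X\cap Z$ being a hyperplane of $Z$ pins down $X\cap H_{11}$. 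The result is a dichotomy for $X$ (and, symmetrically, for $Y$): either $X\cap H_{00}$ is one-dimensional with $X\cap H_{01}=X\cap H_{10}=0$ and $H_{11}\subseteq X$ (type~(i)), or $X\cap H_{00}=0$, $H_{01}\oplus H_{10}\subseteq X$ and $X\cap H_{11}$ is a hyperplane of $H_{11}$ (type~(ii)).

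Finally I would invoke the hypothesis that $X\cap Y$ is of codimension $2$ in $X$. Using $X\cap Y=\bigoplus_{i,j}\bigl((X\cap H_{ij})\cap(Y\cap H_{ij})\bigr)$, a short count shows that this codimension is at most $1$ when $X$ and $Y$ are both of type~(i) (only the $H_{00}$-sector can contribute, and $X\cap H_{00}$ is a line) and also at most $1$ when both are of type~(ii) (only the $H_{11}$-sector contributes, and two hyperplanes of $H_{11}$ meet in codimension at most $2$ in $H_{11}$). Hence one of $X,Y$ is of type~(i) and the other of type~(ii), and in this situation, for every pair $(i,j)$, at least one of $X\cap H_{ij}$, $Y\cap H_{ij}$ equals $\{0\}$ or $H_{ij}$; consequently each $H_{ij}$ has an orthonormal basis spanning both $X\cap H_{ij}$ and $Y\cap H_{ij}$, and the union of these bases is an orthonormal basis of $H$ witnessing the compatibility of $X$ and $Y$.

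The main obstacle is organizational rather than conceptual: one must keep the dimension bookkeeping honest when $H_{11}$, and possibly $H_{00}$, is infinite-dimensional — which is why every codimension statement has to be read as a quotient dimension instead of a difference of dimensions — and one must verify that the dichotomy above is genuinely exhaustive. The decisive step, which turns the whole argument into a finite case analysis, is the passage to the joint spectral decomposition of $P_Z$ and $P_{Z'}$: it confines all the non-trivial behaviour to the single sector $H_{11}$, inside which $Z$ and $Z'$ are invisible.
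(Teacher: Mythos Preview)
Your argument is correct, and it follows a genuinely different route from the paper's proof. You organize everything around the joint spectral decomposition $H=H_{00}\oplus H_{01}\oplus H_{10}\oplus H_{11}$ of the commuting projections $P_Z,P_{Z'}$, decompose $X$ and $Y$ along these four blocks, and reduce the ortho-adjacency conditions to the clean dichotomy type~(i)/type~(ii); the codimension-$2$ hypothesis then forces a type mismatch, after which compatibility is immediate because in every block one of the two intersections is trivial or full. The paper instead first argues geometrically that $Z\cap Z'$ must be contained in $X$ or in $Y$ (this is precisely your type~(i) condition), and then passes to the $4$-dimensional subspace $M=(X+Y)\cap(X\cap Y)^{\perp}$, where it shows by a chain of compatibility-closure and orthogonality arguments that $X\cap M$ and $Y\cap M$ are in fact \emph{orthogonal}. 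Your operator-algebraic approach is shorter and more systematic, and it yields compatibility directly without the orthogonality chase inside $M$; the paper's approach stays closer to elementary subspace geometry and gives the slightly stronger conclusion $X'\perp Y'$, which is not needed elsewhere but makes the picture more vivid. Both routes ultimately hinge on the same structural fact, namely that $Z\cap Z'$ lies in exactly one of $X,Y$; you reach it by counting, the paper by a short contradiction on adjacency.
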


\begin{proof}
We assert that $Z\cap Z'$ is contained in $X$ or $Y$.
If $Z\cap Z'$ is not contained in $X$, then $Z\cap X$ and $Z'\cap X$ are distinct hyperplanes of $X$ and their sum is $X$. 
Similarly, if $Z\cap Z' \not\subset Y$, then  $Z\cap Y$ and $Z'\cap Y$ are distinct hyperplanes of $Y$ whose sum is $Y$.
Then $Z+Z'$ contains both $X,Y$. 
Since $Z\cap X$ and $Z'\cap X$ are subspaces of codimension $2$ in $Z+Z'$, the subspace $X$ is a hyperplane of $Z+Z'$. 
For the same reason, $Y$ is a hyperplane of $Z+Z'$. 
Then $X,Y$ are adjacent which is impossible. 

Without loss of generality, we can assume that $Z\cap Z'$ is contained in $X$.
Then $Z\cap Z'$ is a hyperplane of $X$. 
The orthogonal complement of $X\cap Y$ in $X+Y$ is $4$-dimensional and we denote this subspace by $M$. 
Then
$$X'=X\cap M\;\mbox{ and }\; Y'=Y\cap M$$
are $2$-dimensional subspaces whose intersection is $0$. 
Observe that each of $Z,Z'$ contains $X\cap Y$ and is contained in $X+Y$
which implies that
$$S=Z\cap M\;\mbox{ and }\; S'=Z'\cap M$$
are distinct $2$-dimensional subspaces. 
Since $Z\cap Z'$ is a hyperplane of $X$ and $X\cap Y$ is contained in $Z\cap Z'$,
$$Z\cap Z'\cap M =S\cap S'$$
is a $1$-dimensional subspace of $X'$ which will be denoted by $P$.
The subspaces $$Z=(X\cap Y)\oplus S,\;\;Z'=(X\cap Y)\oplus S'$$
are ortho-adjacent to $Y=(X\cap Y)\oplus Y'$ and, consequently,
$S$ and $S'$ intersect $Y'$ in certain $1$-dimensional subspaces $P_1$ and $P_2$,
respectively.
The subspaces $P_1, P_2$ are distinct (since $S=P+P_1$ and $S'=P+P_2$ are distinct).

\begin{center}
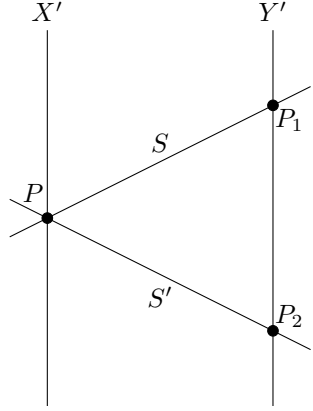

\begin{tikzpicture}

\draw[color=black] (0,0)--(0,5);
\draw[color=black] (3,0)--(3,5);

\draw[color=black] (-0.5,2.75)--(3.5,0.75);
\draw[color=black] (-0.5,2.25)--(3.5,4.25);

\draw[fill=black] (0,2.5) circle (2pt); 

\draw[fill=black] (3,1) circle (2pt); 
\draw[fill=black] (3,4) circle (2pt);

\node at (0,5.25) {$X'$};
\node at (3,5.25) {$Y'$};

\node at (1.5,3.5) {$S$};
\node at (1.5,1.45) {$S'$};

\node at (3.22,3.8) {$P_1$};
\node at (3.22,1.2) {$P_2$};

\node at (-0.187,2.83) {$P$};

\end{tikzpicture}
\captionof{figure}{The intersections of $X', Y'$ with $S$ and $S'$}
\end{center}

Show that $S$ and $S'$ are compatible to both $X',Y'$. 
The following statement is well-known
\cite[Lemma 1.14]{Pankov-book}: 
if a closed subspace $A\subset H$ is compatible to closed subspaces $B,C\subset H$, 
then $A$ is compatible to $B\cap C$, $\overline{B+C}$ and $B^{\perp}$.
This implies that $Z$ is compatible to $X\cap Y$ and $X+Y$ 
(since $Z$ is compatible to $X,Y$).
Then $Z$ is compatible to 
$$M=(X+Y)\cap (X\cap Y)^{\perp}$$
and, consequently, to $X'=X\cap M$. So, $X'$ is compatible to $Z$ and $M$ 
($X'$ is contained in $M$) which means that $X'$ is compatible to $S=Z\cap M$. 
For the same reason, $X'$ is compatible to $S'$ and $Y'$ is compatible to both $S,S'$.

To complete our proof we need to show that $X'$ and $Y'$ are orthogonal. 

Recall that $P=S\cap S'$ is a $1$-dimensional subspace of $X'$.
Let $Q$ be the unique $1$-dimensional subspace of $X'$ orthogonal to $P$.
Since $X'$ is compatible to $S$ and $S'$, we obtain that $Q$ is orthogonal to $S$ and $S'$.
Therefore, $Q$ is orthogonal to $S+S'$ and, consequently, to  $Y'=P_1+P_2\subset S+S'$.

Show that $P$ is orthogonal to $Y'$. 
Let $Q_i$, $i=1,2$ be the $1$-dimensional subspace of $Y'$ orthogonal to $P_i$.
Then $Q_1\ne Q_2$, since $P_1\ne P_2$.
Furthermore, $Q_1$ is orthogonal to $S$ (since $S$ and $Y'$ are compatible) and, similarly, 
$Q_2$ is orthogonal to $S'$. 
This means that $Q_1,Q_2$ both are orthogonal to $P=S\cap S'$
and $Y'=Q_1+Q_2$ is orthogonal to $P$.

So, $X'=P+Q$ is orthogonal to $Y'$. This implies that $X$ and $Y$ are compatible.
\end{proof}

%If $X,Y\in {\mathcal G}_{\infty}(H)$ are non-compatible and adjacent, then  there is an element of ${\mathcal G}_{\infty}(H)$ ortho-adjacent to both $X,Y$
%(the orthogonal sum of $X\cap Y$ and a $1$-dimen\-sional subspace of $(X+Y)^{\perp}$ is as required).

\begin{lemma}\label{lemma-main}
For distinct $X,Y\in {\mathcal G}_{\infty}(H)$ the following conditions are equivalent:
\begin{enumerate}
\item[{\rm (1)}] $X,Y$ are adjacent {\rm(}not necessarily ortho-adjacent{\rm)};
\item[{\rm (2)}] there are infinitely many $Z\in {\mathcal G}_{\infty}(H)$ ortho-adjacent to both $X,Y$ such that there are infinitely many $Z'\in {\mathcal G}_{\infty}(H)$ ortho-adjacent to $X,Y,Z$. 
\end{enumerate}
\end{lemma}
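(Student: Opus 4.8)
The plan is to treat the two implications separately: $(1)\Rightarrow(2)$ by an explicit construction, and $(2)\Rightarrow(1)$ by combining Lemmas \ref{lemma-ch1} and \ref{lemma-ch2} with the dichotomy recorded at the beginning of this section.

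For $(1)\Rightarrow(2)$, suppose $X,Y$ are adjacent and set $N=X\cap Y$, a hyperplane of both. Since $X,Y$ are infinite-dimensional, $X+Y=N\oplus M$ with $M\subseteq N^{\perp}$ two-dimensional, and $K:=(X+Y)^{\perp}$ is infinite-dimensional, being of codimension one less than that of $X$. For any one-dimensional subspace $L\subseteq K$ put $Z_{L}=N\oplus L$. I would check that $Z_{L}\in{\mathcal G}_{\infty}(H)$, that $Z_{L}\ne X,Y$ (as $L\subseteq K$ meets neither), that $Z_{L}\cap X=Z_{L}\cap Y=N$ is a hyperplane in each of $X$, $Y$, $Z_{L}$, and that $Z_{L}$ is compatible with each of $X,Y$ since $L$ is orthogonal to both $X$ and $Y$; concretely, an orthonormal basis of $H$ extending one of $N$ together with a unit vector spanning $L$ and a unit vector spanning $X\ominus N$ (respectively $Y\ominus N$) witnesses the compatibility. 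Thus each $Z_{L}$ is ortho-adjacent to both $X,Y$, and there are infinitely many such $L$. Fixing one of them, say $Z=Z_{L}$, the same argument applied inside $K\ominus L$ produces infinitely many one-dimensional $L'\subseteq K$ with $L'\perp L$ for which $Z_{L'}=N\oplus L'$ is ortho-adjacent to each of $X,Y,Z$. Hence (2) holds.

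For $(2)\Rightarrow(1)$, assume (2) and suppose, towards a contradiction, that $X,Y$ are not adjacent; then they are not ortho-adjacent either. Since (2) supplies at least one $Z$ ortho-adjacent to both $X,Y$, the dichotomy forces $X\cap Y$ to have codimension $2$ in both. Choosing, again by (2), some $Z$ ortho-adjacent to both $X,Y$ that admits infinitely many $Z'$ ortho-adjacent to $X,Y,Z$, and picking one such $Z'$, we obtain an ortho-adjacent pair $Z,Z'$ each of whose members is ortho-adjacent to both $X,Y$; Lemma \ref{lemma-ch2} then yields that $X,Y$ are compatible. But now Lemma \ref{lemma-ch1} applies and says that exactly two elements of ${\mathcal G}_{\infty}(H)$ are ortho-adjacent to each of $X,Y,Z$, contradicting the existence of infinitely many $Z'$. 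Therefore $X,Y$ are adjacent.

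The genuinely geometric content has already been absorbed into Lemmas \ref{lemma-ch1} and \ref{lemma-ch2}, so the only delicate point left here is verifying in $(1)\Rightarrow(2)$ that the subspaces $N\oplus L$ are simultaneously of infinite dimension and infinite codimension and are genuinely compatible — not merely adjacent — with $X$ and $Y$; this is precisely why the extra vector is drawn from $K=(X+Y)^{\perp}$, which makes the required orthogonalities automatic.
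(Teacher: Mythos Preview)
Your argument is correct and follows the same route as the paper: the construction $Z_{L}=N\oplus L$ with $L\subset (X+Y)^{\perp}$ is exactly the paper's $P+(X\cap Y)$, and your $(2)\Rightarrow(1)$ matches the paper's use of Lemmas~\ref{lemma-ch1} and~\ref{lemma-ch2} verbatim. Your version is slightly more explicit in spelling out why $Z_{L}$ is compatible (not just adjacent) with $X$ and $Y$, which the paper leaves to the reader.
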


\begin{proof}
$(1)\Rightarrow(2)$.
Since $(X+Y)^{\perp}$ is infinite-dimensional, for every $1$-dimensional $P\subset (X+Y)^{\perp}$ there are infinitely many 
$1$-dimensional subspaces $Q\subset (X+Y)^{\perp}$ orthogonal to $P$.
For any such $P$ and $Q$ the ortho-adjacent subspaces
$$
P+(X\cap Y), Q+(X\cap Y)
$$
are ortho-adjacent to each of $X,Y$.

$(2)\Rightarrow(1)$.
Suppose that $X,Y$ are not adjacent. Then $X\cap Y$ is of codimension $2$ in both $X,Y$.
Since there is a pair of ortho-adjacent  elements of ${\mathcal G}_{\infty}(H)$ which are ortho-adjacent to both $X,Y$,
Lemma \ref{lemma-ch2} implies that $X$ and $Y$ are compatible.
Then, by Lemma \ref{lemma-ch1}, for every  $Z\in {\mathcal G}_{\infty}(H)$ ortho-adjacent to both $X,Y$
there are precisely two elements of ${\mathcal G}_{\infty}(H)$ ortho-adjacent to $X,Y,Z$
which contradicts our assumption. 
Therefore, $X,Y$ are adjacent. 
\end{proof}

\section{Proof of Theorem \ref{theorem-main}}
Let $f$ be a bijective transformation of ${\mathcal G}_{\infty}(H)$ preserving the ortho-adjacency in both directions.
Lemma \ref{lemma-main} shows that $f$ also is adjacency preserving in both directions. 
Then $f$ preservers the family of subsets maximal with respect to the property that any two distinct elements are adjacent. 
By \cite[Proposition 2.14]{Pankov-book}, every such subset is of one of the following types: 
\begin{enumerate}
\item[$\bullet$] the star ${\mathcal S}(X)$, $X\in {\mathcal G}_{\infty}(H)$ which consists of all closed subspaces of $H$ containing $X$ as  a hyperplane;
\item[$\bullet$] the top ${\mathcal G}^{1}(X)$, $X\in {\mathcal G}_{\infty}(H)$.
\end{enumerate}
Let ${\mathcal C}$ be an $A$-component of ${\mathcal G}_{\infty}(H)$.
Denote by ${\mathcal C}_{+1}$ and ${\mathcal C}_{-1}$
the sets of all $X\in {\mathcal G}_{\infty}(H)$ such that $X$  contains a certain element of ${\mathcal C}$ as a hyperplane
or $X$ is a hyperplane in a certain element of ${\mathcal C}$, respectively.
Then $X\in {\mathcal G}_{\infty}(H)$ belongs to ${\mathcal C}_{+1}$ if and only if the top ${\mathcal G}^{1}(X)$ is contained in ${\mathcal C}$;
similarly, $X$  belongs to ${\mathcal C}_{-1}$ if and only if the star ${\mathcal S}(X)$ is contained in ${\mathcal C}$.
It is easy to check  that ${\mathcal C}_{+1}$ and ${\mathcal C}_{-1}$ are $A$-components of ${\mathcal G}_{\infty}(H)$.
By \cite[Theorem 2.19]{Pankov-book},  one of the following possibilities is realized:
\begin{enumerate}
\item[(1)] $f$ sends every top ${\mathcal G}^1(X)$, $X\in {\mathcal C}_{+1}$ to a top and every star ${\mathcal S}(Y)$, $Y\in {\mathcal C}_{-1}$ to a star;
\item[(2)] $f$ transfers all tops ${\mathcal G}^1(X)$, $X\in {\mathcal C}_{+1}$ to stars and all stars ${\mathcal S}(Y)$, $Y\in {\mathcal C}_{-1}$ to tops.
\end{enumerate}
In the case (2), the composition of the orthocomplementary map and $f|_{\mathcal C}$ sends 
tops to tops and stars to stars. Therefore, it is sufficient to show that $f|_{\mathcal C}$ is induced by a unitary or anti-unitary operator  in the case (1).

From this moment, we assume that  (1) is realized.

\begin{lemma}\label{lemma-p1}
Let $X\in {\mathcal C}_{+1}$. If
$f({\mathcal G}^1(X))={\mathcal G}^1(X')$,
then there is a unitary or anti-unitary operator $U_X:X\to X'$ such that
$$f(M)=U_X(M)$$
for every $M\in {\mathcal G}^1(X)$.
\end{lemma}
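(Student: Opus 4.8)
The plan is to restrict $f$ to the ``top'' ${\mathcal G}^{1}(X)$, identify this top with the Grassmannian ${\mathcal G}_{1}(X)$ of rays of the Hilbert space $X$ via orthocomplementation inside $X$, observe that under this identification the restricted ortho-adjacency relation becomes the orthogonality relation, and then invoke Uhlhorn's version of Wigner's theorem. First I would record the elementary facts: since $X\in{\mathcal C}_{+1}\subset{\mathcal G}_{\infty}(H)$, the subspace $X$ is infinite-dimensional, so every hyperplane of $X$ lies in ${\mathcal G}_{\infty}(H)$ and ${\mathcal G}^{1}(X)$ really is the Grassmannian of all hyperplanes of $X$; as $f$ is a bijection, the same applies to $X'$ and ${\mathcal G}^{1}(X')$, and $f|_{{\mathcal G}^{1}(X)}$ is a bijection onto ${\mathcal G}^{1}(X')$ preserving the (restricted) ortho-adjacency in both directions.

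The crucial step is to describe the ortho-adjacency relation of ${\mathcal G}_{\infty}(H)$ restricted to ${\mathcal G}^{1}(X)$. Two distinct hyperplanes $M_{1},M_{2}$ of $X$ always span $X$, hence their intersection is a hyperplane in each of them, so they are adjacent; therefore they are ortho-adjacent if and only if they are compatible. Writing $L_{i}=X\cap M_{i}^{\perp}$ for the line which is the orthogonal complement of $M_{i}$ in $X$, I would verify that $M_{1},M_{2}$ are compatible in $H$ exactly when $L_{1}\perp L_{2}$: since $M_{i}\subset X$, the projections $P_{M_{1}},P_{M_{2}}$ commute in $B(H)$ if and only if their restrictions to $X$ commute, and these restrictions equal $\mathrm{id}_{X}-P_{L_{i}}$, which commute if and only if $P_{L_{1}},P_{L_{2}}$ commute, i.e.\ if and only if $L_{1}\perp L_{2}$ (recall that compatibility of subspaces is commutativity of the corresponding projections). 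Consequently the map $\theta_{X}\colon M\mapsto X\cap M^{\perp}$ is a bijection of ${\mathcal G}^{1}(X)$ onto ${\mathcal G}_{1}(X)$ carrying the restricted ortho-adjacency relation to the orthogonality relation of ${\mathcal G}_{1}(X)$, and likewise for $\theta_{X'}$.

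Next I would transfer $f$. The map $h:=\theta_{X'}\circ f|_{{\mathcal G}^{1}(X)}\circ\theta_{X}^{-1}\colon{\mathcal G}_{1}(X)\to{\mathcal G}_{1}(X')$ is a bijection which, since $f$ preserves ortho-adjacency in both directions and $\theta_{X},\theta_{X'}$ intertwine ortho-adjacency with orthogonality, preserves the orthogonality relation in both directions. By Uhlhorn's version of Wigner's theorem \cite{Uhlhorn} there is a unitary or anti-unitary operator $U_{X}\colon X\to X'$ inducing $h$; if one prefers the single-space formulation, note that $h$ maps maximal pairwise-orthogonal families of lines to such families bijectively, so that $\dim X=\dim X'$, and then compose $h$ with a fixed unitary $X'\to X$ and apply the one-space statement, recovering $U_{X}$. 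Finally I would unwind the definitions: for $M\in{\mathcal G}^{1}(X)$ put $L=X\cap M^{\perp}$, so that $f(M)=\theta_{X'}^{-1}(h(L))=X'\cap U_{X}(L)^{\perp}$; since $U_{X}$ is a surjective isometry it maps $X\cap M^{\perp}$ onto $X'\cap U_{X}(M)^{\perp}$, whence $X'\cap U_{X}(L)^{\perp}=U_{X}(M)$, a hyperplane contained in the hyperplane $X'\cap U_{X}(L)^{\perp}$ of $X'$, so the two coincide. Thus $f(M)=U_{X}(M)$ for every $M\in{\mathcal G}^{1}(X)$, as required.

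I expect the main obstacle to be the second step: pinning down that, for hyperplanes of $X$, compatibility \emph{in} $H$ is exactly orthogonality of their one-dimensional complements \emph{in} $X$, so that the restricted ortho-adjacency is precisely the orthogonality relation on ${\mathcal G}_{1}(X)$ and Uhlhorn's theorem becomes applicable; a secondary point is ensuring that theorem is used correctly between two \emph{a priori} different infinite-dimensional Hilbert spaces. Both are routine once properly set up, but they carry the content of the lemma.
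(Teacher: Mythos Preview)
Your proposal is correct and follows essentially the same route as the paper: identify ${\mathcal G}^{1}(X)$ with ${\mathcal G}_{1}(X)$ via orthocomplementation inside $X$, check that ortho-adjacency of hyperplanes corresponds to orthogonality of lines, and apply Uhlhorn's theorem to the resulting bijection ${\mathcal G}_{1}(X)\to{\mathcal G}_{1}(X')$. Your verification of the compatibility step (through commutativity of the projections restricted to $X$) and your attention to the two-space form of Uhlhorn's theorem are more explicit than the paper's presentation, but the argument is the same.
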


\begin{proof}
Consider the bijective map $g:{\mathcal G}_1(X)\to {\mathcal G}_1(X')$ defined as follows:
$$g(P)=f(P^{\perp}\cap X)^{\perp}\cap X'$$
for every $1$-dimensional subspace $P\subset X$.
Observe that $1$-dimensional subspaces $P,Q\subset X$ are orthogonal if and only if 
$$P^{\perp}\cap X,Q^{\perp}\cap X\in {\mathcal G}^1(X)$$ are ortho-adjacent;
the same holds for $1$-dimensional subspaces of $X'$.
This implies that $g$ is orthogonality preserving in both directions, since $f$ is ortho-adjacency preserving in both directions.
By Uhlhorn's theorem, $g$ is induced by a unitary or anti-unitary operator $U_X:X\to X'$.
Then 
$$f(M)=g(M^{\perp}\cap X)^{\perp}\cap X'=(U_X(M^{\perp}\cap X))^{\perp}\cap X'=U_X(M)$$
for every $M\in {\mathcal G}^1(X)$.
\end{proof}

\begin{lemma}\label{lemma-p2}
If $Z\in {\mathcal C}_{-1}$ is contained in $X\in{\mathcal C}_{+1}$, then
$$f({\mathcal S}(Z))={\mathcal S}(U_X(Z)).$$
\end{lemma}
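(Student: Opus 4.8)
The plan is to relate the star ${\mathcal S}(Z)$ to the collection of tops through the one-dimensional subspaces of $X$ that contain $Z$, and then invoke Lemma \ref{lemma-p1}. First I would observe that $Z\in {\mathcal C}_{-1}$ sits inside $X\in {\mathcal C}_{+1}$ as a subspace of codimension $2$, so the elements of ${\mathcal S}(Z)$ that are also contained in $X$ are exactly the hyperplanes of $X$ containing $Z$; these form a ``line'' in the top ${\mathcal G}^1(X)$, and equivalently correspond, via $M\mapsto M^\perp\cap X$, to the one-dimensional subspaces of $X$ orthogonal to the two-dimensional subspace $Z^\perp\cap X$. The key point is that $Z$ itself can be recovered inside $X$ as the intersection of all these hyperplanes, i.e. $Z=\bigcap\{M\in {\mathcal G}^1(X):Z\subset M\}$.

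Next I would use that $f$ preserves adjacency and hence, by \cite[Theorem 2.19]{Pankov-book} in case (1), sends the star ${\mathcal S}(Z)$ to some star ${\mathcal S}(Z')$ with $Z'\in {\mathcal C}_{-1}$, and sends ${\mathcal G}^1(X)$ to ${\mathcal G}^1(X')$ with $X'\in {\mathcal C}_{+1}$. The intersection ${\mathcal S}(Z)\cap {\mathcal G}^1(X)$ is nonempty (it contains all hyperplanes of $X$ through $Z$), so ${\mathcal S}(Z')\cap {\mathcal G}^1(X')$ is nonempty as well, which forces $Z'\subset X'$. Now $f$ maps the hyperplanes of $X$ containing $Z$ bijectively onto the hyperplanes of $X'$ containing $Z'$, and by Lemma \ref{lemma-p1} this map is $M\mapsto U_X(M)$. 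Taking intersections over all such $M$ and using that $U_X$ is an operator (so it commutes with intersections of subspaces of $X$) gives
$$Z'=\bigcap_{Z\subset M\in {\mathcal G}^1(X)}U_X(M)=U_X\!\left(\bigcap_{Z\subset M\in {\mathcal G}^1(X)}M\right)=U_X(Z).$$
Hence $f({\mathcal S}(Z))={\mathcal S}(Z')={\mathcal S}(U_X(Z))$, as claimed.

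The main obstacle I anticipate is the bookkeeping needed to justify that $Z$ really is the intersection of the hyperplanes of $X$ through it and that there are enough such hyperplanes for the argument to have content — concretely, one needs $Z$ to have codimension exactly $2$ in $X$ (equivalently $Z^\perp\cap X$ is $2$-dimensional), so that the family of intermediate hyperplanes is a genuine projective line with infinitely many members, and one must check ${\mathcal S}(Z)\cap{\mathcal G}^1(X)$ is carried by $f$ onto ${\mathcal S}(Z')\cap{\mathcal G}^1(X')$ rather than merely into it, which follows from applying the same reasoning to $f^{-1}$. The step where $U_X$ is pushed through the intersection is routine since $U_X$ is a bijective (semi)linear isometry of $X$ onto $X'$ and therefore an order isomorphism of the corresponding subspace lattices. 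Once these points are in place, the identification $Z'=U_X(Z)$ and hence the equality of stars is immediate.
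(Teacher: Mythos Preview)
Your proof is correct and follows essentially the same approach as the paper's: represent $Z$ as an intersection of hyperplanes of $X$ lying in ${\mathcal S}(Z)\cap{\mathcal G}^1(X)$, apply Lemma~\ref{lemma-p1} to each, and conclude $Z'=U_X(Z)$. The only difference is cosmetic: the paper uses just two such hyperplanes $M,N$ (so $Z=M\cap N$ and $Z'=f(M)\cap f(N)=U_X(M)\cap U_X(N)=U_X(Z)$), whereas you intersect over the entire pencil.
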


\begin{proof}
We take distinct $M,N$ belonging to ${\mathcal S}(Z)\cap {\mathcal G}^1(X)$.
Then $Z=M\cap N$ and
$$f({\mathcal S}(Z))={\mathcal S}(Z'),$$
where $Z'=f(M)\cap f(N)$.
By Lemma \ref{lemma-p1},
$$U_X(Z)=U_X(M)\cap U_X(N)=f(M)\cap f(N)=Z'$$
which gives the claim.
\end{proof}

\begin{lemma}\label{lemma-p3}
If $X,Y\in {\mathcal C}_{+1}$, then 
$$U_X(P)=U_Y(P)$$
for every $1$-dimensional subspace $P\subset X\cap Y$.
\end{lemma}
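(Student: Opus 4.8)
The plan is to prove the equality by induction on the common codimension of $X\cap Y$. We may assume $X\neq Y$; then, since $X$ and $Y$ lie in the same $A$-component ${\mathcal C}_{+1}$, the subspace $Z:=X\cap Y$ has one and the same finite codimension $d\ge 1$ in $X$ and in $Y$, and the induction runs on $d$. The case $d=1$, in which $X$ and $Y$ are adjacent, is the core of the argument; the general case will be reduced to it by inserting a single intermediate element of ${\mathcal C}_{+1}$.

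For the base case $d=1$ the subspace $Z$ is a hyperplane of both $X$ and $Y$, hence $Z\in{\mathcal G}^1(X)\subset{\mathcal C}$, so $Z\in{\mathcal C}$. Fix an arbitrary closed hyperplane $W$ of $Z$. Then $W\in{\mathcal C}_{-1}$, being a hyperplane of $Z\in{\mathcal C}$, and $W$ has codimension $2$ in each of $X$ and $Y$; applying Lemma \ref{lemma-p2} first to $W\subset X$ and then to $W\subset Y$ gives
\[{\mathcal S}(U_X(W))=f({\mathcal S}(W))={\mathcal S}(U_Y(W)),\]
and hence $U_X(W)=U_Y(W)$, since a star determines its centre. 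As $P$ is a $1$-dimensional subspace of the infinite-dimensional space $Z$, it equals the intersection of all closed hyperplanes of $Z$ containing it (given $v\in Z\setminus P$, a closed hyperplane of $Z$ through $P$ but not through $v$ is obtained from a closed complement of $P+{\mathbb C}v$ in $Z$). The operators $U_X$ and $U_Y$, being bijective isometries, commute with arbitrary intersections of closed subspaces, so $U_X(P)=U_Y(P)$.

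For the inductive step $d\ge 2$ I would step once from $X$ towards $Y$. Put $M:=(X+Y)\cap Z^{\perp}$; then $\dim M=2d$, the spaces $X_M:=X\cap Z^{\perp}$ and $Y_M:=Y\cap Z^{\perp}$ are $d$-dimensional with $X_M\cap Y_M=0$ and $X_M+Y_M=M$, and $X=X_M\oplus Z$, $Y=Y_M\oplus Z$. Choose a $(d-1)$-dimensional $W\subset X_M$ and a $1$-dimensional $L\subset Y_M$, and set $X_1:=W+L+Z$. Then $X_1\in{\mathcal G}_{\infty}(H)$; the subspace $X_1\cap X=W+Z$ is a hyperplane of both $X$ and $X_1$ (so $X$ and $X_1$ are adjacent), and $X_1\cap Y=L+Z$ has codimension $d-1$ in both $X_1$ and $Y$. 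Since $X_1$ is adjacent to $X\in{\mathcal C}_{+1}$, it again belongs to ${\mathcal C}_{+1}$, so $U_{X_1}$ is defined by Lemma \ref{lemma-p1}. As $P\subset Z$ is contained in both $X_1\cap X$ and $X_1\cap Y$, the base case applied to $(X,X_1)$ together with the inductive hypothesis applied to $(X_1,Y)$ gives $U_X(P)=U_{X_1}(P)=U_Y(P)$, which closes the induction.

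The step I expect to demand the most care is the bookkeeping just invoked: checking that $X_1$ again lies in ${\mathcal C}_{+1}$ — equivalently, that it has infinite dimension and codimension and is adjacent to $X$ — and that the common codimension of $X_1\cap Y$ truly drops to $d-1$. Both are routine consequences of the orthogonal decompositions $X=X_M\oplus Z$, $Y=Y_M\oplus Z$, $X_1=(W+L)\oplus Z$ and of $X_M\cap Y_M=0$, but they are the only non-formal ingredient of the proof.
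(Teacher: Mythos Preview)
Your argument is correct and follows the same overall plan as the paper: reduce to the case where $X$ and $Y$ are adjacent by threading a chain in ${\mathcal C}_{+1}$ through $X\cap Y$, then invoke Lemma~\ref{lemma-p2} on hyperplanes of $Z=X\cap Y$. The only genuine difference is in how you finish the adjacent case. You show $U_X(W)=U_Y(W)$ for \emph{every} closed hyperplane $W$ of $Z$ and then recover $P$ as an intersection of such hyperplanes. The paper instead uses a single hyperplane, namely $W=P^{\perp}\cap Z$, together with the fact (from Lemma~\ref{lemma-p1}) that $U_X(Z)=f(Z)=U_Y(Z)$; since $U_X(Z)=U_X(P)\oplus U_X(W)$ and $U_Y(Z)=U_Y(P)\oplus U_Y(W)$ are orthogonal decompositions, $U_X(W)=U_Y(W)$ forces $U_X(P)=U_Y(P)$ immediately. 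This is a bit more economical, but your intersection argument is perfectly valid and in fact yields the stronger intermediate statement that $U_X$ and $U_Y$ agree on all closed hyperplanes of $Z$.
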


\begin{proof}
Since ${\mathcal C}_{+1}$ is an $A$-component of ${\mathcal G}_{\infty}(H)$, 
there is a sequence 
$$X=X_0,X_1,\dots,X_{m}=Y$$
of elements of ${\mathcal C}_{+1}$, where $X_{i-1},X_i$ are adjacent for every $i\in \{1,\dots,m\}$;
furthermore, we can construct this sequence such that each $X_i$ contains $X\cap Y$.
For this reason, it is sufficient to consider the case when $X,Y$ are adjacent.

If $X$ and $Y$ are adjacent, then $M=X\cap Y$ belongs to ${\mathcal C}$. 
Let $P$ be a $1$-dimensional subspace of $M$. Then $Z=P^{\perp}\cap M$ belongs to ${\mathcal C}_{-1}$. Since $Z$ is contained in both $X,Y$, we have 
\begin{equation}\label{eq-p1}
U_X(Z)=U_Y(Z)
\end{equation}
by Lemma \ref{lemma-p2}.
Then
$$U_X(P)\oplus U_X(Z)=U_X(M)=f(M)=U_Y(M)=U_Y(P)\oplus U_Y(Z)$$
and \eqref{eq-p1} implies that $U_X(P)=U_Y(P)$.
\end{proof}

Consider the transformation $h$ of ${\mathcal G}_1(H)$ defined as follows: 
for a $1$-dimensional subspace $P\subset H$ we take any $X\in {\mathcal C}_{+1}$ containing $P$ and set $h(P)=U_X(P)$. 
By Lemma \ref{lemma-p3}, $h$ is well-defined. 
Since for any two $1$-dimensional subspaces of $H$ there is an element of ${\mathcal C}_{+1}$ containing them,
$h$ is a bijection preserving the orthogonality relation in both directions. Then $h$ is induced by a unitary or anti-unitary operator $U$ on $H$ (Uhlhorn's theorem).
The restriction of $U$ to every $X\in {\mathcal C}_{+}$ is a scalar multiple of $U_X$. This implies that $f|_{\mathcal C}$ is induced by $U$.

\section{Final remarks}
{\bf 1}. By  \cite[Theorem 3.17]{Pankov-book},
every automorphism of the poset $({\mathcal G}_{\infty}(H),\subset)$ is induced by an invertible bounded linear or conjugate-linear operator.
We explain why the same statement  is not proved for the restrictions of adjacency preserving transformations to $A$-components of  ${\mathcal G}_{\infty}(H)$.

Let $f$ be a bijective transformation of ${\mathcal G}_{\infty}(H)$ preserving the adjacency relation in both directions
and let ${\mathcal C}$ be an $A$-component of ${\mathcal G}_{\infty}(H)$.
Without loss of generality, we can assume that $f({\mathcal C})={\mathcal C}$. 
Denote by ${\mathcal C}_{\pm}$ 
the set of all $X\in {\mathcal G}_{\infty}(H)$ such that $X$ is a subspace of finite codimension in a certain element of ${\mathcal C}$ or 
$X$ contains a certain  element of ${\mathcal C}$ as a subspace of finite codimension.
Then $f$ can be uniquely extended to an automorphism of the poset $({\mathcal C}_{\pm},\subset)$ \cite[Theorem 2.19]{Pankov-book}.
This extension is denoted by the same symbol $f$.

Suppose that $f(X)=X$ for a certain $X\in {\mathcal C}$ and 
consider the lattice of finite-dimensional subspaces of $X$.
The map sending every finite-dimensional subspace $M\subset X$ to $f(M^{\perp}\cap X)^{\perp}\cap X$ is a lattice automorphism. 
By the Fundamental Theorem of Projective Geometry, it is induced by a semilinear bijection $A:X\to X$, i.e. 
$A$ is additive and there is an automorphism $\sigma$ of the field of complex numbers such that $A(ax)=\sigma(a)A(x)$
for every $x\in X$ and $a\in {\mathbb C}$. 
It must  be pointed out that $A$ is not necessarily bounded; furthermore, the automorphism $\sigma$ is not necessarily the identity or the conjugate map. 
Then
$$f(Y)=(A(Y^{\perp}\cap X))^{\perp}\cap X$$
for every $Y\in {\mathcal C}_{\pm}$ contained in $X$.

If $A$ is an invertible bounded  linear or conjugate-linear operator,
then the restriction of $f$ to the set of all $Y\in {\mathcal C}_{\pm}$ contained in $X$ is induced by $(A^{-1})^*$ \cite[Proposition 3.7]{Pankov-book}.

Conversely, if this restriction is induced by a semilinear bijection $B:X\to X$,
then $B$ sends closed hyperplanes of $X$ to closed hyperplanes which guarantees that
$B$ is a bounded  linear or conjugate-linear operator (see \cite{KM} or \cite[Lemma 3.12]{Pankov-book}).
By \cite[Proposition 3.7]{Pankov-book}, $A$ is a scalar multiple of $(B^{-1})^{*}$ which is impossible if $A$ is unbounded. 

So, the restriction cannot be induced by a semilinear bijection if $A$ is unbounded.
To assert  that $A$  is bounded we need to extend $f$ on ${\mathcal G}_{\infty}(X)$.
Only in this case, we can apply arguments used to prove \cite[Theorem 3.17]{Pankov-book}.

{\bf 2}.  We say that two elements of ${\mathcal G}_{\infty}(H)$ are {\it rank-one incident} if one of them is a hyperplane of the other.
Then $X,Y\in {\mathcal G}_{\infty}(H)$ are connected via this relation if and only if $X\cap Y$ is of finite codimension in both $X,Y$
(the codimensions are not necessarily equal). The corresponding connected components are all ${\mathcal C}_{\pm}$, where 
${\mathcal C}$ is an $A$-component. By the above reasons, we cannot describe
the restriction of a bijective transformation of ${\mathcal G}_{\infty}(H)$ preserving the rank-one incidence (in both directions) to ${\mathcal C}_{\pm}$.

\end{document}